\numberwithin{equation}{section} \makeatletter
 \newtheorem{prop}{Proposition}[section]
 \newtheorem{lemma}[prop]{Lemma}
\renewcommand{\tilde}{\widetilde}
\renewcommand{\hat}{\widehat}
\newcommand{\bref}[1]{\textbf{\ref{#1}}}
\newcommand{\gh}[1]{\mathrm{gh}(#1)}
\newcommand{\dd}{\partial}
\renewcommand{\d}{\partial}
\renewcommand{\dh}{\mathrm{d_h}}
\newcommand{\dv}{\mathrm{d_v}}
\newcommand{\binner}[2]{%
  {\langle}\kern-4.15pt{\langle}#1{,}\,#2{\rangle}\kern-4.15pt{\rangle}}
\newcommand{\commut}[2]{[#1{,}\,#2]}
\newcommand{\ab}[2]{\big(#1,#2\big)}
\newcommand{\half}{\mathchoice{%
    \ffrac{1}{2}}{\frac{1}{2}}{\frac{1}{2}}{\frac{1}{2}}}
\newcommand{\ffrac}[2]{\raisebox{.5pt}%
  {\footnotesize$\displaystyle\frac{#1}{#2}$}\kern1pt}
\newcommand{\dl}[1]{\mathchoice{\ffrac{\dd}{\dd #1}}{\frac{\dd}{\dd
      #1}}{\ffrac{\dd}{\dd #1}}{\ffrac{\dd}{\dd #1}}}
\newcommand{\st}[2]{{\overset{#1}{#2}}}
\newcommand{\vdl}[1]{\ffrac{{\delta}}{\delta #1}}
\newcommand{\ii}{\mathfrak{i}}
\newcommand{\Liealg}{\mathfrak} 
\newcommand{\algg}{\Liealg{g}}
\newcommand{\fR}{\mathbb{R}}
 \def\cE{\mathcal{E}}
 \def\cH{\mathcal{H}}
\newcommand{\Vol}{\mathcal{V}}
\def\BG-Poincare{Barnich:2009jy}
\def\Fedosov-book{Fedosov:1996fu}
\g@addto@macro\bfseries{\boldmath}
\title{Presymplectic AKSZ formulation of Einstein gravity\vspace{0.5em}}
\author[a,b]{Maxim Grigoriev\vspace{0.75em}}
\author[c]{Alexei Kotov\vspace{0.75em}}
\affil[a]{Lebedev Institute of Physics,
  Leninsky ave. 53, 119991 Moscow, Russia \vspace{1em}}
\affil[b]{Institute for Theoretical and Mathematical Physics,
  Lomonosov Moscow State University, 119991 Moscow, Russia}
  \affil[c]{Faculty of Science, University of Hradec Kralove, Rokitanskeho 62, Hradec Kralove 50003, Czech Republic \vspace{1em}}
\date{}                     
\begin{document}

\maketitle

\begin{abstract}
Any local gauge theory can be represented as an AKSZ sigma model (upon parameterization if necessary). However, for non-topological models in dimension higher than 1 the target space is necessarily infinite-dimensional. The interesting alternative known for some time is to allow for degenerate presymplectic structure in the target space. This leads to a very concise AKSZ-like representation for frame-like Lagrangians of gauge systems. In this work we concentrate on Einstein gravity and show that  not only the Lagrangian but also the full-scale Batalin--Vilkovisky formulation is naturally encoded in the presymplectic AKSZ formulation, giving an elegant supergeometrical construction of BV for Cartan-Weyl action. The same applies to the  main structures of the respective Hamiltonian BFV formulation.
\end{abstract}

\tableofcontents

\section{Introduction}

Batalin-Vilkovisky (BV) formalism~\cite{Batalin:1981jr,Batalin:1983wj} has proved a powerful tool not only in quantization but more generally in analyzing physical content of gauge systems and even constructing new gauge models. In the context of local gauge theories the appropriate enhancement~\cite{Barnich:1995db,Barnich:1995ap,Barnich:2000zw} (see also \cite{Stora:1983ct,DuboisViolette:1985jb,Piguet:1995er}  for earlier important developments) of the BV approach operates in terms of the jet-bundles associated to fields, ghost fields and antifields.

If one is only interested in the equations of motion and hence disregards Lagrangians and associated graded symplectic structures the appropriate version~\cite{Barnich:2004cr} (see also~\cite{Lyakhovich:2004xd,Kazinski:2005eb}) of the local BV formalism can be formulated immediately in terms of manifolds which are more general than jet-bundles, giving a powerful generalization~\cite{Barnich:2010sw,Grigoriev:2019ojp} of the standard approach and leading to more invariant and flexible description of gauge systems.

More precisely, a generic local gauge field theory formulated at the level of equations of motion can be represented~\cite{Barnich:2010sw} as a nonlagrangian version~\cite{Barnich:2005bn} of AKSZ-type sigma model, whose target space is a BV jet-bundle of the system or one of its equivalent reductions. In this way one can define and analyze local gauge field theories in terms of generic $Q$-manifolds that are not necessarily jet-bundles. This approach can be regarded as a BV extension of the invariant geometrical approach to PDEs~\cite{Vinogradov1981} (see also~\cite{vinogradov:2001,Krasil'shchik:2010ij}). It can also be considered as an extension of the AKSZ construction~\cite{Alexandrov:1995kv} to the case of not necessarily topological gauge theories. 

At the Lagrangian level a generic local gauge system can be also represented~\cite{Grigoriev:2010ic,Grigoriev:2012xg} as an AKSZ-type sigma model whose target space is a suitably defined graded cotangent bundle over the jet-bundle associated to fields and ghosts or one of its equivalent reductions.  This approach also known as the Lagrangian parent formulation has certain remarkable features. In particular, just like conventional AKSZ sigma models it automatically contains  Batalin-Fradkin-Vilkovisky (BFV)~\cite{Batalin:1983pz,Batalin:1977pb} Hamiltonian formulation. Moreover, the approach gives a systematic way to derive frame-like description of the system: for instance in the case of Einstein gravity the familiar frame-like (also known as Cartan-Weyl) formulation~\cite{Weyl:1929} in terms of the frame-field and Lorentz connection can be systematically arrived at~\cite{Grigoriev:2010ic} as a suitable equivalent reduction of the parent formulation for the metric-like formulation of Einstein gravity.

Despite its nice supergeometrical structure, the Lagrangian parent formulation involves an infinite tower of generalized auxiliary fields.  It turns out that by eliminating most of auxiliaries but at the same time trying to keep the supergeometrical structure intact one can naturally arrive at so-called presymplectic AKSZ formulations~\cite{Alkalaev:2013hta}. These have the form of finite-dimensional AKSZ sigma models whose target space presymplectic structure is allowed to be degenerate. In this way one can find elegant presymplectic AKSZ formulations~\cite{Alkalaev:2013hta} for a variety of gauge theories including the frame-like form of Einstein gravity. It turns out that the ghost-independent part of the presymplectic AKSZ action for gravity is precisely the Cartan-Weyl action. However, if space-time dimension is greater than 3 the BV-like 2-form defined on the space of supermaps to the target space is degenerate and the BV interpretation of such a presymplectic AKSZ sigma model has remained somewhat unclear. 

Later on it was realised~\cite{Grigoriev:2016wmk} that the presymplectic 2-form on the target space is closely related to a BV extension of the canonical 2-form on the stationary surface, which is induced by the Lagrangian. As we are going to see it can also be seen as a BV symplectic structure completed to a cocycle of the total differential $\dh+s$ and transferred to the minimal formulation of the BV-BRST complex.\footnote{The descent symplectic structures completing the BV symplectic structure have been discussed in~\cite{Sharapov:2016sgx}.} Here, $\dh$ denotes the horizontal differential and $s$ the BV-BRST differential of the theory.

In this work we concentrate on the example of Einstein Gravity and give a consistent interpretation of its presymplectic AKSZ formulation. More specifically, we demonstrate that the BV presymplectic 2-form on the space of supermaps is regular and factoring out its kernel results in the standard symplectic BV field-antifield configuration space while the AKSZ action functional induces the BV master action, giving  a concise and geometrical BV formulation of the frame-like gravity. Analogous procedure applied to the presymplectic AKSZ sigma model restricted to the spatial slice of the space-time results in the BFV phase space and 1-st class constraints of the frame-like gravity.   However the presymplectic structure in this case is not regular and the phase space is recovered as a maximal symplectic submanifold of the respective space of supermaps.  Up to this subtlety, just like in the case of usual AKSZ, its presymplectic version also contains both BV and BFV formulations. Note that the BFV  phase-space encoded in the proposed presymplectic formulation is precisely the one of~\cite{Canepa:2020rhu,Canepa:2020ujx}, where the relation between BV and BFV for the frame-like gravity has been recently studied.

\section{Presymplectic AKSZ form of Cartan-Weyl action}
\label{sec:psymp}

Let $\algg[1]$ be a linear space of Poincar\'e or (A)dS algebra in $n$ dimensions with the degree shifted by 1 and regarded as a graded manifold. The standard coordinates  $\xi^a,\rho^{ab}$ are Grassmann odd variables of degree $1$ associated to the translation (transvections) and the Lorentz rotation generators respectively. The Lie algebra structure on $\algg$ defines a $Q$-structure on $\algg[1]$, which can be identified with the Chevalley-Eilenberg (CE) differential of $\algg$. In terms of the coordinates it is given by
\begin{equation}
q \xi^a=\rho^a{}_b \xi^b\,, \qquad
q \rho^{ab}=\rho^{a}{}_{c}\rho^{cb}+\lambda \xi^a\xi^b,
\end{equation}
where parameter $\lambda$ is related to the cosmological constant through $\lambda=-\frac{2\Lambda}{(n-1)(n-2))}$. At $\lambda=0$ this gives CE differential of the Poincar\'e algebra. 

On $\algg[1]$ there is a natural $q$-invariant presymplectic structure of degree $n-1$, which reads as~\cite{Alkalaev:2013hta}:
\begin{equation}
    \omega^{\algg[1]}= \Vol_{abc}(\xi) d\xi^{a}d\rho^{bc}\,,\qquad
    \Vol_{a_1\ldots a_k}(\xi)=\frac{1}{(n-k)!}\epsilon_{a_1\ldots a_{k}b_{1}\ldots b_{n-k}} \xi^{b_1}\ldots \xi^{b_{n-k}}\,.
\end{equation}

It follows from $L_q\omega^{\algg[1]}=0$ and 
$d \omega^{\algg[1]}=0$ that $d\ii_q\omega^{\algg[1]}=0$ and hence
\begin{equation}
    \ii_q \omega^{\algg[1]} =d\cH   \,, \qquad \ii_q\ii_q\omega^{\algg[1]}=0=q\cH\,, 
\end{equation}
for some function $\cH$ of degree $n$. Furthermore, there exists a presymplectic potential $\chi\in \Lambda^{1}(\algg[1])$ such that $\omega^{\algg[1]}=d\chi$. In the case at hand one can take:
\begin{equation}\label{chi-H}
\chi= \Vol_{ab}(\xi) d\rho^{ab} \,, \qquad \cH=
 \Vol_{ab}(\xi)(\rho^{a}{}_c\rho^{c b}+\frac{(n-2)\lambda }{n}\xi^a\xi^b)\,.
\end{equation}
Note that in contrast to $\chi$, which is defined up to a $d$-closed 1-form, function $\cH$ is uniquely determined by $\omega^{\algg[1]}$ and $q$.

Let us consider the presymplectic AKSZ sigma model with the source space being $(T[1]X,d_X)$, where $X$ is a space-time manifold of dimension $n>2$ and $d_X$ is the de Rham differential seen as a homological vector field on $T[1]X$, and the target space being $(\algg[1],q,\omega^{\algg[1]})$. Maps from $(T[1]X,d_X)$ to $(\algg[1],q,\omega^{\algg[1]})$ are field configurations of the Cartan-Weyl formulation of gravity. Indeed, in terms of coordinates a map $\sigma$ is parameterized by 
\begin{equation}
\label{map-param}
    \sigma^*(\xi^a)=e^a_\mu(x) \theta^\mu\,, \qquad 
    \sigma^*(\rho^{ab})=\omega^{ab}_\mu(x) \theta^\mu\,,
\end{equation}
where $x^\mu,\theta^\mu$ are standard local coordinates on $T[1]X$ induced by local coordinate son the base $X$. Fields
$e^{a}_\mu(x),\omega^{ab}_\mu(x)$ are identified with the usual frame field and Lorentz connection and, as usual, we require configurations to be such that $e^a_\mu(x)$ is invertible. 

The data of AKSZ sigma-model determine the action functional as follows:
\begin{equation}
\label{AKSZ-action}
S[\sigma]=\int_{T[1]X}( \sigma^*(\chi)(d_X)+\sigma^*(\cH))
\end{equation}
where $\chi\in \Lambda^{1}(\algg[1])$ and $\cH \in \Lambda^{0}(\algg[1])$ are given in~\eqref{chi-H}. In component one has:
\begin{multline}
\label{CW-action}
  S[e,\omega]=\int_X \Vol_{ab}(e) (d_X \omega^{ab}+\omega^{a}{}_c\omega^{c b}+\frac{(n-2)\lambda}{n}e^ae^b)=\\
  \int_X \Vol_{ab}(e) (d_X \omega^{ab}+\omega^{a}{}_c\omega^{c b})-2\Lambda \Vol(e)
\end{multline}
where fields $e^a,\omega^{ab}$ parameterize $\sigma$ according to \eqref{map-param} and the wedge product of space-time differential forms is assumed. This is indeed a familiar Cartan-Weyl action. 

\section{BV-AKSZ interpretation of the model}
If $\omega$ is nondegenerate (which is not the case for gravity in $n>3$) the BV formulation is extracted from the AKSZ data as follows: the BV field-antifield space is the space of supermaps from $T[1]X$ to the target supermanifold.  In contrast to the space of maps the space of supermaps is a graded manifold. The coordinates there can be introduced as follows
\begin{gather}
    \hat\sigma^*(\xi^a)=\st{0}{\xi}{}^a(x)+e^a_\mu(x) \theta^\mu+\half\st{2}{\xi}{}^a_{\mu\nu}(x)\theta^\mu\theta^\nu+\ldots\,, \\
    \hat\sigma^*(\rho^{ab})=\st{0}{\rho}{}^{ab}(x)\omega^{ab}_\mu(x) \theta^\mu+\half\st{2}{\rho}{}^{ab}_{\mu\nu}(x)\theta^\mu\theta^\nu+\ldots\,,
\end{gather}
where now the form-degree $k$ components carry ghost degree $1-k$. The space of maps is recovered by setting to zero all the coordinates of nonvanishing degree.

The target space symplectic structure $\omega^{\algg[1]}$ determines a BV symplectic structure of degree $-1$ on the space of supermaps:
\begin{equation}
\label{symp}
\omega^{BV}=\int_{T[1]X}\hat\sigma^*(\omega^{\algg[1]}_{AB})\delta \psi^A(x,\theta)\wedge  \delta \psi^B(x,\theta)
\end{equation}
where we introduced a collective notation $\psi^A$ for target space coordinates $\xi^a,\rho^{ab}$ and $\psi^A(\theta)=\hat\sigma^*(\psi^A)$.  It is nondegenerate provided $\omega$ is. The BV action is given by
\begin{equation}
\label{BV-AKSZ-action}
S^{BV}[\hat\sigma]=\int_{T[1]X}( \hat\sigma^*(\chi)(d_X)+\hat\sigma^*(\cH))\,,
\end{equation}
where the difference with~\eqref{AKSZ-action} is in $\sigma$ replaced by $\hat\sigma$. In particular, setting fields of nonzero degree to zero one recovers~\eqref{AKSZ-action}. 
For further details and developments of the AKSZ approach we refer to~\cite{Alexandrov:1995kv,Cattaneo:1999fm,Grigoriev:1999qz,Batalin:2001fc, Park:2000au,Roytenberg:2002nu,Barnich:2005ru,Kazinski:2005eb,Kotov:2007nr, Bonechi:2009kx,Barnich:2009jy,Cattaneo:2012qu,Grigoriev:2012xg,Boulanger:2012bj,Ikeda:2012pv,Bonavolonta:2013mza}.

In the presymplectic case one can still define
the BV-like action \eqref{BV-AKSZ-action}
and the presymplectic structure~\eqref{symp} on the space of supermaps.  These satisfy an analog of the master equation that can be defined as follows: the homological vector field $q$ in the target space and the de Rham differential $d_X$ on $X$ naturally define the BRST differential $s$ on the space of supermaps \cite{Alexandrov:1995kv}:
\begin{equation}
\label{s-AKSZ}
s=\int d^nx d^n\theta (d_X \psi^A(x,\theta)+q^A(\psi(x,\theta))\vdl{\psi^A(x,\theta)}\,.
\end{equation}
One can then check that by construction
\begin{equation}
\label{master-2}
\omega^{BV}(s,s)=0\quad \qquad \ii_s \omega^{BV}=\delta S^{BV}\,,
\end{equation}
modulo boundary terms.

As we are going to see the presymplectic structure is  regular in a certain precise sense and hence it defines the symplectic structure on the symplectic quotient space, i.e. the space of leaves of the kernel distribution determined by $\omega^{BV}$. One can then check that $S^{BV}$ is annihilated (modulo boundary terms) by the distribution and hence defines a well defined functional on the space of leaves.  Moreover, in a similar way both $s$ and $\omega^{BV}$ induces the respective structures on the quotient and altogether they satisfy the analog of~\eqref{master-2}. Finally, because the presymplectic structure induced on the quotient is nondegenerate ~\eqref{master-2} implies usual BV master equation on the symplectic quotient and hence this data defines a conventional BV formulation on the symplectic quotient. 

Furthermore, one can check that $e^a_\mu$ and $\omega^{ab}_\mu$ are not in the kernel of $\omega^{BV}$ so that the ghost-independent part of the BV action on the quotient space is just the Cartan-Weyl action~\eqref{CW-action}. Together with the facts that the symplectic structure is nondegenerate on the quotient, the spectrum of ghost fields precisely corresponds to the gauge invariance of the action, and the master equation holds this implies that we have indeed arrived at the BV formulation of gravity in the Cartan-Weyl form.

Because the above consideration deals with infinite-dimensional manifolds some care is required. However, as we are going to see in the next section the factorization boils down to that of the finite-dimensional manifold while the construction of basic objects can be made precise by employing the jet-bundle technique.

Moreover, it turns out the symplectic quotient can be explicitly realised as a submanifold $J_X(N)$ of the entire manifold $J_X(M)$ of supermaps from $T[1]X$ to $\algg[1]$, which is transversal to the kernel distribution. In this way $S^{BV},\omega^{BV}$ on the quotient can be obtained by simply pulling back these structures to $J_X(N)$. 

It is important to stress that in order to study the theory there is no need to explicitly identify the symplectic quotient. The master equation, gauge fixing etc. can be implemented just in terms of $J_X(M)$.
Moreover, constraints determining the submanifold do not involve space-time derivatives so that they can be easily implemented e.g. in the path-integral and hence, at least formally, quantization can be also performed without explicit restriction to the symplectic quotient.  The formalism we have arrive at can be regarded as a presymplectic BV-AKSZ formalism or a version of BV-AKSZ formalism with constraints.

\subsection{The structure of the fiber and its symplectic quotient}

The space of (super)maps from $T[1]X$ to $\algg[1]$ can be locally represented as the space of (super)maps from $X$ to $M$, where $M$ is a space of (super)maps from $T_x[1]X$ to $\algg[1]$ for a given $x\in X$. It is clear that $M$ is finite-dimensional. If $\psi^A$ and $\theta^\mu$ are coordinates on $\algg[1]$ and $T_x[1]X$ respectively then a generic supermap is determined by a function $\psi^A(\theta)$ whose coefficients  can be taken as coordinates on $M$. It is convenient to employ $\psi^A(\theta)$ as a generating function for coordinates on $M$.

In the case at hand we chose $M$ to be the space of supermaps from $T_x[1]X$ to $\algg[1]$ satisfying the additional condition that the component $e^a_\mu$ entering $\xi^a(\theta)$ as $e^a_\mu \theta^\mu$ is required to be nondegenerate. Speaking geometrically $M$ is the fiber bundle over $GL(n,\fR)$.

The presymplectic structure on $\algg[1]$ determines that on $M$ via
\begin{equation}
    \omega^M=\int d^n\theta\,\, \omega^{\algg[1]}_{AB}(\psi(\theta))d\psi^A(\theta)
    \wedge d\psi^B(\theta)\,,
\end{equation}
where $d\psi^A(\theta)$ is a generating function for basis differentials of coordinates on $M$. To analyse the structure of $\omega^M$ it is convenient to
consider a submanifold $M_0\subset M$ determined by
\begin{equation}
{\xi^a}=0,\qquad \st{2}{\xi}{}^a_{\mu\nu}=0, \qquad 
\ldots\,,\qquad
\st{n}{\xi}{}^a_{\mu_1\ldots \mu_n}=0    \,,
\end{equation}
where the coordinate functions $\st{l}{\xi}{}^a_{\mu_1\ldots \mu_l}$ are introduced as follows:
\begin{equation}
    \xi^a(\theta)=\xi^a+e^a_\mu \theta^\mu+\half \st{2}{\xi}{}^a_{\mu\nu}\theta^\mu\theta^\nu
+\ldots
+\frac{1}{n!}\st{n}{\xi}{}^a_{\mu_1\ldots \mu_n}\theta^{\mu_1}\ldots \theta^{\mu_n}\,,
\end{equation}
and by some abuse of notations we denote $\st{0}{\xi}{}^a$ by $\xi^a$. Together with coordinate functions $\rho^{ab},\omega_\mu^{ab},\st{i}{\rho}{}^{ab}_{\mu_1\ldots\mu_i}$, $i=2,\ldots,n$ introduced in a similar way these provide a natural coordinate system on $M$. 

To understand the structure of the kernel of $\omega^M$ it is instructive to consider $\omega^M$ at a given point $p\subset M_0$. By changing the basis in $T_x X$ one can assume that $e^a_\mu=\delta^a_\mu$ to further simplify the analysis. In this basis the explicit expression for $\omega^M$ at $p$ reads as
\begin{equation}
\label{omega-p}
\omega^M_p = 
de^b_c\wedge d\st{2}{\rho}{}^{c}_{b}+
d\xi^b \wedge d \st{3}{\rho}{}_b+
d\rho^{ab}\wedge d\st{3}{\xi}{}_{ab}+
d\omega^{ab}_c \wedge d\st{2}{\xi}{}^c_{ab}\,,
\end{equation}
where $\st{2}{\rho}{}^{c}_{b}$, $\st{3}{\rho}{}_b$, $\st{3}{\xi}{}_{ab}$, and $\st{2}{\xi}{}^c_{ab}$ parameterize the following components:
\begin{equation}
\st{2}{\rho}{}^{cd}_{bd}\,, \qquad 
\st{3}{\rho}{}^{cd}_{bcd}\,, \qquad 
\st{3}{\xi}{}^{c}_{abc}\,, \qquad 
\st{2}{\xi}{}^c_{ab}\,.
\end{equation}
The remaining components are in the kernel of $\omega^M_p$. It is easy to see that the spectrum of coordinates along which $\omega^M$ is nondegenerate is precisely that required for  minimal BV formulation of GR. Let us stress that for the moment this is only established at $M_0$.

The crucial fact is that $M$ is a regular presymplectic manifold. To see this consider the following vector fields on $\algg[1]$ (here and below we restrict to 4d to simplify the analysis):
\begin{equation}
\label{VF}
\begin{gathered}
X^4_a=\xi^{(4)}\dl{\xi^a}\,, \qquad X^3_{ab}=\xi^{(3)}_a\dl{\xi^b}+\xi^{(3)}_b\dl{\xi^a}\\ Y^4_{ab} =\xi^{(4)}\dl{\rho^{ab}}\,, \quad Y^{3}_{abc}=\xi^{(3)}_a\dl{\rho^{bc}}+\xi^{(3)}_b\dl{\rho^{ac}}\,, \quad 
Y^{2}_{abcd}=\xi^{(2)}_{ab}\dl{\rho^{cd}}+\ldots\,,
\end{gathered}
\end{equation}
where $\ldots$ in the last expression denotes terms symmetryzing the expression in $ac$ and $bd$. Here $\xi^{(k)}_{a_1\ldots a_{4-k}}$ denote $\frac{1}{k!}\epsilon_{a_1\ldots a_{4-k}a_{4-k+1}\ldots a_4}\xi^{a_{4-k+1}}\ldots \xi^{a_4}$. It is easy to check that all these vector fields are in the kernel of $\omega^{\algg[1]}$ and commute to one another.

By natural prolongation vector fields~\eqref{VF} on $\algg[1]$ determine the vector fields on $M$. Given a vector field $X$ on $\algg[1]$ the component expression for its prolongation $\hat X$ can be obtained as follows:
\begin{equation}
    \hat X\psi^A(\theta)=X^A(\psi(\theta))\,,
\end{equation}
where $X^A=X\psi^A$. Prolongation commutes with the commutator. In particular, prolongations of an involutive set of vector fields on $\algg[1]$ is again an involutive set. What is less trivial is that the distribution determined by an involutive set on $\algg[1]$ can be  nonregular while
its prolongation is regular. This happens because in our case $M$ is not the space of all maps but only of those whose $e^a_\mu$ component is invertible. 

We have the following:
\begin{lemma} 
\label{lemma:reg}
The distribution determined by the prolongations of vector fields~\eqref{VF} coincides with the kernel distribution of $\omega^M$ on $M$ and hence $(M,\omega^M)$ is a regular presymplectic manifold. 
\end{lemma}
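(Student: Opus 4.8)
The plan is to prove the two inclusions $D\subseteq\ker\omega^M$ and $\ker\omega^M\subseteq D$, where $D$ denotes the distribution spanned over $C^\infty(M)$ by the prolongations of the vector fields~\eqref{VF}, and to keep track of ranks so that the assertion that $(M,\omega^M)$ is \emph{regular} drops out at the end. Throughout I work in $n=4$, as~\eqref{VF} does.

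For $D\subseteq\ker\omega^M$ I would use the two facts recorded just before the lemma: each field in~\eqref{VF} is annihilated by $\omega^{\algg[1]}$ and they pairwise commute. Since prolongation intertwines contraction — concretely $\ii_{\hat V}\omega^M$ equals, up to a sign, $2\int d^4\theta\,(\ii_V\omega^{\algg[1]})_B(\psi(\theta))\,d\psi^B(\theta)$, which vanishes identically as soon as $\ii_V\omega^{\algg[1]}=0$ on $\algg[1]$ — the prolongation of a kernel field of $\omega^{\algg[1]}$ is a kernel field of $\omega^M$; and as prolongation is a morphism of Lie brackets, $D$ is involutive. Next I would show $D$ has constant rank, which is the point where the defining condition of $M$ (invertibility of the component $e^a_\mu$) is used in an essential way: on the space of \emph{all} supermaps the analogous distribution degenerates on $\xi=0$, because the coefficients $\xi^{(k)}$ of~\eqref{VF} vanish there, whereas on $M$ each prolonged field acts on some coordinate of $M$ with a coefficient that is a unit — e.g.\ $\hat X^4_a$ acts on the top coordinate $\st{n}{\xi}{}^a_{\mu_1\ldots\mu_n}$ with coefficient proportional to $\det(e)$, and similarly for the others — so the prolonged fields are pointwise independent everywhere and $\dim D=N$ is constant, $N$ being the number of independent fields in~\eqref{VF}.

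The substance of the proof is the reverse inclusion, equivalently $\rank\omega^M=\dim M-N$ at every point. First, because contraction is algebraic and the Berezin pairing $\int d^4\theta$ on $\Lambda(\theta)$ is nondegenerate, $V\in\ker\omega^M$ is equivalent to the two $\Lambda(\theta)$-valued equations $\Vol_{abc}(\xi(\theta))\,(V\xi^a(\theta))=0$ and $\Vol_{abc}(\xi(\theta))\,(V\rho^{bc}(\theta))=0$; in particular $\ker\omega^M$ at a point depends on the supermap only through the polynomial $\xi^a(\theta)$, and $\omega^M$ is equivariant (up to a nonzero factor) under the $GL(n,\fR)$ change of basis on $T_x[1]X$, so one may take $e^a_\mu=\delta^a_\mu$. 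On $M_0$ this is the situation of~\eqref{omega-p}: $\omega^M$ splits into the hyperbolic pairs $(e,\st{2}{\rho})$, $(\xi,\st{3}{\rho})$, $(\rho,\st{3}{\xi})$, $(\omega,\st{2}{\xi})$ plus a kernel of dimension $N$, so the rank is $\dim M-N$ there. It remains to switch on $\st{0}{\xi},\st{2}{\xi},\ldots,\st{n}{\xi}$ and see that the rank does not drop. I would restrict $\omega^M$ to the coordinate subspace $V_0$ spanned by the non-kernel coordinates of~\eqref{omega-p} and write its matrix, ordering the $\xi$-type coordinates by increasing $\theta$-degree and the $\rho$-type coordinates by decreasing $\theta$-degree. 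Then the ``hyperbolic'' entries (coefficient $\st{1}{\xi}=e$) sit on the block diagonal and are nondegenerate for every invertible $e$, the corrections proportional to $\st{2}{\xi},\ldots,\st{n}{\xi}$ fall strictly block-above the diagonal, and the corrections proportional to $\st{0}{\xi}=\xi^a$ fall strictly block-below it, hence lie in the nilpotent ideal generated by the Grassmann-odd coordinates $\xi^a$. So $\omega^M|_{V_0}=(D_e+U)\bigl(\mathbf 1+(D_e+U)^{-1}L\bigr)$ with $D_e+U$ block-upper-triangular with invertible diagonal and $(D_e+U)^{-1}L$ nilpotent; thus $\omega^M|_{V_0}$ is nondegenerate at every point of $M$. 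Hence $\rank\omega^M\geq\dim V_0=\dim M-N$ everywhere, which together with $D\subseteq\ker\omega^M$ forces $\rank\omega^M=\dim M-N$, $\dim\ker\omega^M=N=\dim D$, and therefore $\ker\omega^M=D$; the rank being constant, $(M,\omega^M)$ is regular.

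The main obstacle is this last step — controlling $\rank\omega^M$ away from the distinguished slice $M_0$. A naive semicontinuity argument from the single point $p\in M_0$ only yields constancy of the rank near $p$; what makes the statement global is exactly the triangular-plus-nilpotent structure of the corrections, which in turn rests on the invertibility of $e^a_\mu$, i.e.\ on $M$ being a bundle over $GL(n,\fR)$ rather than over all $n\times n$ matrices. Verifying carefully that, in the chosen ordering, the $\st{\geq2}{\xi}$-corrections land strictly above and the $\st{0}{\xi}$-corrections strictly below the diagonal is the computational heart of the argument.
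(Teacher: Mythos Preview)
Your argument is correct and follows the paper's strategy: the inclusion $D\subseteq\ker\omega^M$ comes from the target space, the rank is read off at $M_0$ via~\eqref{omega-p}, and then one argues that neither $\dim D$ nor $\rank\omega^M$ can drop off $M_0$. Two differences are worth a remark. First, where you match $N=\dim D$ with $\dim\ker\omega^M|_{M_0}$ by an implicit dimension count, the paper verifies explicitly (in an appendix) that the prolongations of~\eqref{VF} evaluated at $p\in M_0$ hit precisely the kernel directions of~\eqref{omega-p}, via the irreducible decomposition of each tensor; your counting is cleaner but the equality $N=\dim\ker\omega^M_p$ still has to be checked once. Second, your triangular-plus-nilpotent factorization is more than is needed: after the $GL(n)$-reduction to $e^a_\mu=\delta^a_\mu$, \emph{all} the remaining $\xi$-coordinates $\st{0}{\xi},\st{2}{\xi},\ldots,\st{n}{\xi}$ carry nonzero ghost degree and hence lie in the nilpotent ideal, so $U$ is just as nilpotent as $L$ and $\omega^M|_{V_0}=D_e\bigl(\mathbf 1+D_e^{-1}(U{+}L)\bigr)$ is invertible directly, without invoking triangularity. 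This is exactly the paper's argument (corrections ``proportional to the remaining coordinates of nonvanishing degree \ldots\ can't decrease the rank''), so your block-ordering observation, while correct, buys nothing additional here.
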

\begin{proof}
First we show that on $M_0$ these vector fields exhaust the kernel of $\omega^M$. This is easy to believe because the fields are linearly independent and their tensor structure precisely corresponds to the kernel of $\omega^M$ at $M_0$, cf.~\eqref{omega-p}. The proof is purely technical and is relegated to the Appendix~\bref{app:VF-kernel}.

Next, by construction vector fields $\hat X,\hat Y$ are in the kernel of $\omega^M$ everywhere while at $M_0$ they exhaust the kernel. It follows they define the kernel everywhere. Indeed the dimension of the distribution determined by $\hat X,\hat Y$ can't drop when moving off $M_0$ (because $M$ is a formal neighbourhood of $M_0$). At the same time the rank of $\omega^M$ can't drop off $M_0$ as well so that the rank must be constant.\footnote{Another way to see that is to fix the concrete values of $e^a_\mu,\omega_\mu^{ab}$ and consider $\omega^M$ as function of the remaining coordinates. Then it has the form~\eqref{omega-p} plus terms proportional to the remaining coordinates of nonvanishing degree. But such terms can't decrease the rank. The same argument applies to the distribution determined by the vector fields.}
\end{proof}

The above proof of regularity of $\omega^M$ employs vector fields~\eqref{VF} that we explicitly gave only for the case of $n=4$. It turns out the proof can be extended to generic $n>4$ as follows. Observe that the linear space (over $\fR$) of vector fields on $\algg[1]$ with $\rho$-independent coefficients is isomorphic to the tangent space $T_p M$ at $p\in M_0$. The image of a given vector field on $\algg[1]$ is determined by its prolongation to $M$ considered at $p$. In a similar way $T^*_p M$ is isomorphic to the space of $1$-forms on $\algg[1]$ with
$\rho$-independent coefficients. Moreover these isomorphisms are compatible with the map from vector fields (tangent vectors) to 1-forms (resp. $T^*_pM$) determined by $\omega^{\algg[1]}$ (resp. $\omega^{M}|_p$). This implies that the kernel of $\omega^{M}|_p$ is determined by prolongations of target space vector fields from the kernel of $\omega^{\algg[1]}$ so that the arguments given in the proof of Lemma~\bref{lemma:reg} imply regularity of $\omega^M$.

The regularity of $(M,\omega^M)$ implies that there exists (at least locally) a symplectic quotient $N$ of $M$.  In particular, by Frobenius theorem one can introduce a coordinate system $z^i,w^\alpha$ such that the vector fields $\hat X, \hat Y $ take the form $\dl{{w^\alpha}}$. The submanifold singled out by $w^\alpha=0$ and equipped with a pullback of $\omega^M$ is isomorphic (as a symplectic manifold) to the symplectic quotient. 

However, the Frobenius coordinates are not so easy to find explicitly.  Nevertheless any submanifold transversal to the distribution determined by $\hat X,\hat Y$ and equipped with the induced symplectic structure is also isomorphic (as a symplectic manifold) to the symplectic quotient. Such transversal manifolds can be easily found using the following:
\begin{lemma}
\label{lemma:w-sub}
Let $M$ be a graded presymplectic manifold and $M_0\subset M$ be its submanifold determined by equations $k_r=0$, where $\gh{k_r}\neq 0$. Let $z^i,w^\alpha$ be local homogeneous (i.e. of definite degree) coordinates on $M$ such that $\omega^M(\dl{w^\alpha},\cdot)=0$ on $M_0$, $\dl{w^\alpha}$ determine a kernel of $\omega^M$ at each point of $M_0$, and $\gh{w^\alpha}\neq 0$. Then (in general, locally defined) submanifold $N$ determined by $w^\alpha=0$ is symplectic. Moreover, $N$ is (locally) symplectomorphic to the symplectic quotient of $M$, provided $\omega^M$ is regular.
\end{lemma}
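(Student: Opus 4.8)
The plan is to prove Lemma~\bref{lemma:w-sub} in two stages: first that $N=\{w^\alpha=0\}$ is symplectic, then that it realises the symplectic quotient. For the first stage, the key observation is that the hypothesis ``$\dl{w^\alpha}$ span the kernel of $\omega^M$ \emph{at each point of} $M_0$'' pins down the rank of $\omega^M$ along $M_0$: at $p\in M_0\cap N$ the tangent space splits as $T_pM=\langle\dl{z^i}\rangle\oplus\langle\dl{w^\alpha}\rangle$, and since the second summand is exactly $\ker\omega^M_p$, the restriction of $\omega^M_p$ to $T_pN=\langle\dl{z^i}\rangle$ is nondegenerate. The subtlety is that $N$ also contains points off $M_0$, so I must argue nondegeneracy of $\omega^M|_N$ there too. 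Here I would invoke the same mechanism used in the proof of Lemma~\bref{lemma:reg}: $M$ is a formal (graded) neighbourhood of $M_0$ in the degree-carrying coordinates $k_r$ (note $\gh{k_r}\neq 0$), so $\omega^M$ at a point of $N$ off $M_0$ equals $\omega^M|_{M_0}$ plus terms proportional to coordinates of nonzero ghost degree; such terms cannot lower the rank, hence $\omega^M|_N$ stays nondegenerate and closed, i.e. $N$ is symplectic.

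For the second stage, assume $\omega^M$ is regular. By Frobenius there exist (local) coordinates $\tilde z^i,\tilde w^\alpha$ in which the kernel distribution is $\langle\dl{\tilde w^\alpha}\rangle$ globally, and the quotient is represented by $\{\tilde w^\alpha=0\}$ with the pulled-back form. The standard fact I would use is that any submanifold transversal to a regular integrable isotropic (in fact null) distribution, and of complementary dimension, maps isomorphically onto the leaf space under the quotient projection $\pi:M\to M/\!\!\sim$, and the induced symplectic forms agree because $\omega^M$ is basic for $\pi$. So it remains to check that $N=\{w^\alpha=0\}$ is such a transversal submanifold. Transversality at points of $M_0$ is immediate from $T_pM=\langle\dl{z^i}\rangle\oplus\ker\omega^M_p$ together with the fact that $\ker\omega^M_p$ is spanned by $\dl{w^\alpha}$, which are tangent to the $w$-coordinate lines hence not tangent to $N$. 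Transversality off $M_0$ then follows because the kernel distribution of $\omega^M$ (being regular) has constant rank and varies smoothly, and at $M_0$ it is complementary to $T N$; complementarity is an open condition, and since $M$ is a formal neighbourhood of $M_0$ it cannot fail on the formal neighbourhood. Composing $\pi|_N:N\to M/\!\!\sim$ gives the desired symplectomorphism.

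The step I expect to be the main obstacle is the careful handling of the ``off $M_0$'' directions in the \emph{graded} setting. Because $M$ is not an ordinary manifold but a formal neighbourhood of $M_0$ in odd/nonzero-degree fibre coordinates, statements like ``the rank cannot drop'' and ``transversality is open'' need to be read in the appropriate sense: one expands $\omega^M$ and the distribution in powers of the degree-carrying coordinates $k_r$, notes that the leading term already has maximal rank / is transversal on $M_0$, and argues that higher-order corrections are nilpotent perturbations that preserve these properties. This is the same reasoning as the footnote in the proof of Lemma~\bref{lemma:reg}, and I would make it precise by picking a homogeneous basis adapted to the $\ZZ$-grading so that the relevant matrices are block-triangular with invertible diagonal blocks. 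The closedness $d(\omega^M|_N)=0$ is inherited trivially from $d\omega^M=0$, and the identification of induced symplectic structures is a formal consequence of $\omega^M$ being $\pi$-basic, so those contribute no real difficulty.
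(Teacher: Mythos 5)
Your proposal is correct and follows essentially the same route as the paper: nondegeneracy of the pullback at $N_0=N\cap M_0$ follows directly from the kernel hypothesis, and the paper's ``standard considerations'' for why $\omega^M|_N$ cannot degenerate off $N_0$ are exactly your formal-neighbourhood/rank argument in the nonzero-degree coordinates. The paper's proof stops there and leaves the identification with the symplectic quotient implicit; your second stage (transversality of $N$ to the regular kernel distribution plus $\omega^M$ being basic) is the standard argument it has in mind.
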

\begin{proof}
The pullback  $\omega_N$ of $\omega$ to $N$ is by construction nondegenerate at each point of $N_0=N\cap M_0$. The standard considerations then ensure that $\omega_N$ can't degenerate off  $N_0$. This shows that $N$ is symplectic. 
\end{proof}

Using the above Lemma it is not difficult to find a convenient choice of such functions $w^\alpha$. For instance, consider e.g.:
\begin{equation}
\st{4}{\xi}{}^{a|}\,,\quad 
\st{4}{\rho}{}^{ab|}\,, \quad
\st{3}{\xi}{}^{a|\mu}e_\mu^b+(ab)\,,\quad
\st{3}{\rho}{}^{ab|\mu}e_\mu^c+(ac)\,,\quad
\st{2}{\rho}{}^{ab|\mu\nu}e_\mu^c e_\nu^d+(ac)(bd)\,,
\end{equation}
where e.g. $\st{3}{\xi}{}^{a|\mu}$ stand for $\st{3}{\xi}{}^{a}_{\nu\rho\sigma}\epsilon^{\mu\nu\rho\sigma}$.
Because this set is manifestly $o(3,1)$ invariant it can be useful in practical computations though in this work we do not make use of these functions. Note the first 2 constraints originate from the target space in the sense that they can be represented as $\int d^4\theta\xi^a(\theta)$ and $\int d^4\theta \rho^{ab}(\theta)$. What is less trivial is that the last two can be replaced with those originating from the target space. More precisely, on $M_0$ they coincide with:
\begin{equation}
\int d^4\theta \rho^{ab}(\theta)\xi^c(\theta)+(ac)\,,
\qquad    
\int d^4\theta \rho^{ab}(\theta)\xi^c(\theta)\xi^d(\theta)+(ac)(bd)\,.
\end{equation}
This property could simplify implementation of these constraints in applications.

Given a regular presymplectic manifold $M$  we have the following:
\begin{lemma} Let $Q$ be a homological vector field on $M$ satisfying $\ii_Q\omega^M=d\cH^M$ and $\omega^M(Q,Q)=0$ for some function $\cH^M$. It follows:
\begin{equation}
\ii_{Q_N}\omega^N=d\cH^N \,, \qquad \omega^N(Q_N,Q_N)=0
\end{equation}
where $Q_N,\cH_N$ and $\omega^N$ are induced by $Q,\cH^M$ and $\omega^M$ respectively on the symplectic quotient $N$. If $N$ is identified as the surface $N\subset M$ then $\cH^N,\omega^N$ also coincides with $\cH^M,\omega^M$ pulled back to $N$.
\end{lemma}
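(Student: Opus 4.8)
The plan is to reduce the statement to two facts that the hypotheses force on the kernel distribution $K=\ker\omega^M$ of the regular presymplectic form: that $Q$ preserves $K$, so that it descends to the leaf space, and that $\cH^M$ is constant along the leaves of $K$. Granting those, pushing $Q,\cH^M,\omega^M$ down to the symplectic quotient $N$ and verifying that the two identities survive is a routine exercise in the graded Cartan calculus together with the injectivity of $\pi^{*}$ for the (locally defined) projection $\pi\colon M\to N$ onto the leaf space, whose existence I would simply take from regularity of $(M,\omega^M)$ and Frobenius.

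\emph{Step 1 (the data is projectable).} First I would note that, $\omega^M$ being closed, $\ii_Q\omega^M=d\cH^M$ gives $L_Q\omega^M=d\ii_Q\omega^M+\ii_Qd\omega^M=0$; and for any $v\in\Gamma(K)$ one has $\ii_v\omega^M=0$ and $L_v\omega^M=d\ii_v\omega^M+\ii_vd\omega^M=0$, so $\omega^M$ is basic for the foliation integrating $K$, i.e. $\omega^M=\pi^{*}\omega^N$ with $\omega^N$ closed and nondegenerate on $N$ (this is the defining property of the symplectic quotient). Using $[L_Q,\ii_v]=\ii_{[Q,v]}$ I would then compute $\ii_{[Q,v]}\omega^M=L_Q(\ii_v\omega^M)\pm\ii_v(L_Q\omega^M)=0$, so $[Q,\Gamma(K)]\subseteq\Gamma(K)$ and $Q$ is $\pi$-projectable; set $Q_N=\pi_{*}Q$, which is homological because $[Q_N,Q_N]$ is $\pi$-related to $[Q,Q]=0$ and $\pi$ is a surjective submersion. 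Finally, for $v\in\Gamma(K)$, $v(\cH^M)=\ii_vd\cH^M=\ii_v\ii_Q\omega^M=\pm\,\omega^M(v,Q)=0$, so $\cH^M$ descends to $\cH^N$ with $\pi^{*}\cH^N=\cH^M$.

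\emph{Step 2 (transport the identities).} Since $Q$ and $Q_N$ are $\pi$-related, $\ii_Q\pi^{*}\alpha=\pi^{*}\ii_{Q_N}\alpha$ for every form $\alpha$ on $N$; hence $\pi^{*}(\ii_{Q_N}\omega^N)=\ii_Q\omega^M=d\cH^M=\pi^{*}d\cH^N$, and injectivity of $\pi^{*}$ on forms yields $\ii_{Q_N}\omega^N=d\cH^N$. Likewise $\pi^{*}\bigl(\omega^N(Q_N,Q_N)\bigr)=\omega^M(Q,Q)=0$ forces $\omega^N(Q_N,Q_N)=0$. For the last sentence of the lemma, if $N$ is realised as a submanifold $\iota\colon N\hookrightarrow M$ transversal to $K$ (as produced by Lemma~\bref{lemma:w-sub}), then $\pi\circ\iota$ is a (local) diffeomorphism onto the leaf space and $\iota^{*}\omega^M=\iota^{*}\pi^{*}\omega^N=(\pi\circ\iota)^{*}\omega^N$, $\iota^{*}\cH^M=(\pi\circ\iota)^{*}\cH^N$, i.e. under the identification $N\simeq M/K$ the reduced $\omega^N,\cH^N$ are exactly $\omega^M,\cH^M$ pulled back to the surface $N$.

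\emph{Main obstacle.} The genuinely non-formal input is merely the local existence of the symplectic quotient, which follows from regularity of $(M,\omega^M)$; everything else is Cartan calculus. The one place that needs real care is the bookkeeping of Koszul signs when $Q$ is odd — in particular making sure $\ii_{[Q,v]}\omega^M=0$ and $v(\cH^M)=0$ hold for $v$ of either parity, which is exactly where closedness of $\omega^M$ (through $L_Q\omega^M=0$ and $L_v\omega^M=0$) and the graded antisymmetry of $\omega^M$ are used. In the infinite-dimensional applications (the space of supermaps) the same manipulations are carried out formally, exactly as in the discussion around Lemma~\bref{lemma:reg}.
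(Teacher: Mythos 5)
Your proposal is correct and follows essentially the same route as the paper: the paper's own proof declares the first part ``standard'' (to be seen in the special coordinates $z^i,w^\alpha$) and for the second part gives exactly your Step~1 computations, namely $L_X\omega^M=0$ from $\ii_X\omega^M=0$ and $X\cH=\ii_X\ii_Q\omega^M=0$ for $X$ in the kernel. You have merely written out the ``standard'' projectability argument ($[Q,\Gamma(K)]\subseteq\Gamma(K)$, descent of $\cH^M$ and $\omega^M$, transport via $\pi$-relatedness) in coordinate-free Cartan-calculus form, which is a fair expansion of what the paper leaves implicit.
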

\begin{proof}
The first part of the statement is standard and can be e.g. easily seen using special coordinates $z^i,w^\alpha$. That $\cH_N,\omega^N$ coincides with $\cH^M,\omega^M$ pulled back to $N$ is true because $\cH^M,\omega^M$ are constant along the kernel of $\omega^M$. Indeed, from $\ii_X\omega^M=0$ one finds $L_X\omega^M=0$ and $X\cH=\ii_X d\cH=\ii_X \ii_Q \omega=0$.
\end{proof}
Note that the statement remains true if instead of regularity one requires $N\subset M$ to be symplectic and such that $T_p M=T_p N\oplus \ker(\omega^M_p)$ for any $p\in N$.

Let us discuss the structure of $N$ and natural coordinates therein. First of all one observes that coordinates $\xi^a,\rho^{ab}, e_\mu^a, \omega_\mu^{ab}$ remain independent when restricted to $N$ and hence give a part of the coordinate system on $N$. The remaining coordinates are of negative ghost degree so that there is an invariantly defined submanifold $N_{01} \subset N$ obtained by setting them to zero. 
\begin{lemma}
$N_{01}$ is a Lagrangian submanifold of $N$. $N$ can be identified as $T^*[-1]N_{01}$.
\end{lemma}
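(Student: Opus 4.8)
The plan is to verify the two claims separately, using the explicit coordinate system on $N$ identified just above the statement. Recall that on $N$ the coordinates $\xi^a,\rho^{ab},e^a_\mu,\omega^{ab}_\mu$ survive as independent ones, and the remaining coordinates all carry \emph{negative} ghost degree; call the latter collectively $\pi$, so that $N_{01}$ is cut out by $\pi=0$. First I would observe that $N_{01}$ is isotropic: the symplectic form $\omega^N$ has ghost degree $-1$, so a nonzero pairing $\omega^N(\partial_a,\partial_b)$ between two tangent vectors of $N_{01}$ would require $\gh{a}+\gh{b}=-1$; but the coordinates surviving on $N_{01}$, namely $\xi^a,\rho^{ab}$ (ghost degree $1$) and $e^a_\mu,\omega^{ab}_\mu$ (ghost degree $0$), can never add up to $-1$. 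Hence $\omega^N$ restricted to $T N_{01}$ vanishes identically and $N_{01}$ is isotropic.

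Next I would establish that $N_{01}$ has exactly half the ``dimension'' of $N$, i.e. that it is in fact Lagrangian. The cleanest route is to read this off the point-wise normal form of $\omega^N$. By the previous lemmas $\omega^N$ is obtained by pulling back $\omega^M$ to a transversal slice, and at a point of $M_0$ (equivalently $N_0=N\cap M_0$) the form is given by the explicit expression~\eqref{omega-p} for $\omega^M_p$: it is a sum of canonical pairings, each pairing a ghost-degree-$\geq 0$ coordinate with a ghost-degree-$\leq -1$ coordinate. In every such pair exactly one member lies in the $\xi,\rho,e,\omega$ block and the other in the $\pi$ block; thus along $N_{01}$ the conjugate variables to $\xi^a,\rho^{ab},e^a_\mu,\omega^{ab}_\mu$ are precisely the $\pi$'s, and the annihilator of $TN_{01}$ inside $T_pN$ under $\omega^N$ is $TN_{01}$ itself. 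So $N_{01}$ is Lagrangian at each point of $N_0$, and since $N$ is a formal neighbourhood of $N_0$ and the rank statement propagates (the same ``rank can't drop off $M_0$'' argument used in Lemma~\bref{lemma:reg}) it is Lagrangian on all of $N$.

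For the identification $N\cong T^*[-1]N_{01}$ I would argue as follows. The base coordinates $\xi^a,\rho^{ab},e^a_\mu,\omega^{ab}_\mu$ restrict to coordinates on $N_{01}$, and the remaining coordinates $\pi$ on $N$ are, by the normal form above, canonically conjugate to them with respect to $\omega^N$; since $\omega^N$ has degree $-1$, each $\pi$ conjugate to a base coordinate of ghost degree $g$ has ghost degree $-1-g$, which is exactly the degree of the corresponding fibre coordinate of $T^*[-1]N_{01}$. One then checks that $\omega^N$ agrees, to the relevant order, with the canonical $(-1)$-shifted symplectic form $\sum \delta \pi_I \wedge \delta q^I$ of $T^*[-1]N_{01}$: the $N_0$-value~\eqref{omega-p} is exactly this canonical form, and any correction is proportional to coordinates of nonzero ghost degree, hence can be absorbed by a degree-preserving change of the fibre coordinates (a standard Darboux-type argument along the zero section, which goes through because the grading obstructs the usual homological-perturbation subtleties). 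This yields a (graded, formal) symplectomorphism $N\cong T^*[-1]N_{01}$ intertwining $N_{01}$ with the zero section.

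The main obstacle is the last step: turning the point-wise normal form~\eqref{omega-p} into a genuine global (formal-neighbourhood) identification $N\cong T^*[-1]N_{01}$ rather than merely a statement at $N_0$. Concretely one must check that the off-$M_0$ corrections to $\omega^N$ really can be gauged away by a fibre-preserving coordinate change and that no cohomological obstruction arises; this is where a careful graded Darboux / Moser-type argument is needed, exploiting that every correction term carries strictly positive ghost degree in some variable so that the relevant homotopy operator is well defined. The isotropy and half-dimension claims, by contrast, are essentially immediate from ghost-degree bookkeeping together with~\eqref{omega-p}, and I would present them first.
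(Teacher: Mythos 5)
Your proposal is correct, and it is in fact more complete than the paper's own proof, which is very terse: the paper just computes the single component $\omega^M(\dl{\xi^a},\dl{\rho^{bc}})$, observes that it is proportional to $\st{4}{\xi}{}^a_{\mu\nu\rho\sigma}$ (ghost degree $-3$, hence zero on $N_{01}$), states that the remaining pairings vanish ``in a similar way'', and then uses the identification $N\cong T^*[-1]N_{01}$ without a separate argument. Your ghost-degree bookkeeping replaces the component computation by a general one-line argument and establishes isotropy on all of $N$ at once, not only at points of $N_0$; your use of the normal form~\eqref{omega-p} for the half-dimension count and for exhibiting the $\pi$'s as conjugates of the $\xi,\rho,e,\omega$ block is exactly what the paper implicitly relies on when it writes down the Darboux form of $\omega^N$ right after the lemma. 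Three small remarks. First, in the isotropy step the correct requirement for a surviving pairing is $\gh{a}+\gh{b}\leq -1$ rather than $=-1$ (the restricted component $\omega_{ab}$ need only be a function of nonnegative total degree, built from the degree-$0$ and degree-$1$ coordinates, to be nonzero on $N_{01}$); this changes nothing since both degrees are nonnegative, but the inequality is the honest statement. Second, once isotropy is known everywhere by degree counting, the Lagrangian property follows from the dimension count alone, so the ``rank can't drop off $M_0$'' propagation is not actually needed for that part. Third, the step you single out as the main obstacle --- upgrading the pointwise normal form at $N_0$ to a genuine identification $N\cong T^*[-1]N_{01}$ by a graded Darboux/Moser argument in the formal neighbourhood --- is precisely what the paper glosses over, and your diagnosis of where the nontrivial work lies is accurate; supplying that argument (using that every correction term carries coordinates of nonzero ghost degree, so the relevant homotopy operator is well defined) would make the lemma fully rigorous.
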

\begin{proof}
Using standard coordinates on $M$ one finds that
\begin{equation}
\omega^M(\dl{\xi^a},\dl{\rho^{bc}})=
\epsilon_{abcd}(\st{4}{\xi}{}^a_{\mu\nu\rho\sigma}\epsilon^{\mu\nu \rho\sigma})    
\end{equation}
so that indeed it vanishes when $\st{4}{\xi}{}^a_{\mu\nu\rho\sigma}$ (which is of degree $-3$) is set to zero. In a similar way one finds that $\omega$ vanishes on all pairs of vectors tangent to $N_{01}$. 
\end{proof}
It follows from the above Lemma that identifying
$N$ as a $T^*[-1]N_{01}$ it can be convenenient to use Darboux coordinates given by $\xi^a,\rho^{ab}, e_\mu^a, \omega_\mu^{ab}$ and their canonically conjuated antfields:
\begin{equation}
    \omega^N =d\xi^a \wedge d \xi^*_a+d \rho^{ab}\wedge d\rho^*_{ab}+d e_\mu^a \wedge d e^*{}^\mu_a+ d \omega_\mu^{ab}\wedge  
    d \omega^*{}_\mu^{ab}\,.
\end{equation}
Promoting these variables to fields on $X$ gives a standard set of fields, ghosts and their conjugated antifields required for the BV formulation of the Cartan-Weyl GR.

\subsection{BV from presymplectic AKSZ}

Let us now  turn to the piece of the AKSZ action determined by the De Rham differential. To this end consider the jet-bundle $J_X(M)$ associated to  $X \times M \to X$ and denote by $D_\mu$ the total derivatives with respect to $x^\mu$. It is useful to identify vertical coordinates on $J_X(M)$ as coefficients of the generating functions $\psi^A(y,\theta)$ which are formal power series in auxiliary coordinates $y^\mu$ and degree $1$ coordinates $\theta^\mu$. Here $\psi^A$ are coordinates on $\algg[1]$. In this representation it is clear that $J_X(M)$ can also be defined as a super jet-bundle associated to $T[1]X\times \algg[1]\to T[1]X$.  

On $J_X(M)$ we define the homological vector field $D$ via its action on coordinates:
\begin{equation}
D \psi^A(y,\theta)=\theta^\mu \dl{y^\mu} \psi^A(y,\theta)    \,, \qquad Dx^\mu=0
\end{equation}
Another useful representation for $D$ is as follows:
$D \psi^A(y,\theta)=\theta^\mu D_\mu \psi^A(y,\theta)$\,.

The symplectic structure on $M$ defines a symplectic structure on $J_X(M)$:
\begin{equation}
\omega^{J(M)}=\int d^n\theta (\omega^{\algg[1]}_{AB}(\psi(\theta))\dv \psi^A(\theta)\dv \psi^B(\theta)=
\omega_{ij}\dv z^i \dv z^j
\end{equation}
where in the last equality we used special coordinate system  $z^i,w^\alpha$ on $M$, such that  $\omega(\dl{w^\alpha},\dl{w^\beta})=0=
\omega(\dl{w^\alpha},\dl{z^i})$ on $N$, and whose existence has been proved in the previous section. More precisely, $z^i,w^\alpha$ denote coordinates on $J_X(M)$ obtained by pulling back $z^i$ to $J_X(M)$.

Understood as a local function on $J_X(M)$ the integrand (over $X$) of the AKSZ master-action \eqref{BV-AKSZ-action} takes the following form:
\begin{equation}
\begin{gathered}
L^{BV}=K-\cH_M\,, \\
 K= \int d^n\theta \chi_A(\psi(\theta)) \theta^\mu D_\mu \psi^A(\theta)
\,,
\qquad \cH^M=\int d^n\theta \cH(\psi(\theta))\,,
\end{gathered}
\end{equation}
where $\chi_A$ are components of the presymplectic potential on $\algg[1]$ and we identify functions on $M$ and their pullbacks to $J_X(M)$.

In the jet-bundle terms the BRST differential $s$ (cf.~\eqref{s-AKSZ}) is represented by a vertical evolutionary vector field $s$:
\begin{equation}
s=D+Q^{pr}    
\end{equation}
where $Q^{pr}$ is the prolongation of $Q$ to $J_X(M)$ determined by $\commut{Q^{pr}}{D_\mu}=0$. Furthermore, one has the following relations 
\begin{equation}
\label{iDomega}
    \ii_{D}\omega^{J(M)}=\dv K+D_\mu (\cdot)^\mu\,, \qquad  \ii_{D} \ii_{D} \omega^{J(M)}=D_\mu (\cdot)^\mu
\end{equation}
as well as
\begin{equation}
\qquad \ii_{D} \ii_{Q^{pr}} \omega^{J(M)}=D_\mu (\cdot)^\mu\,.
\end{equation}
which can be directly checked and amount to:
\begin{equation}
\label{pme}
    \ii_s \ii_s \omega^{J(M)}=\dv L^{BV}+D_\mu (\cdot)^\mu\,, \qquad  \ii_{s} \ii_{s} \omega^{J(M)}=D_\mu (\cdot)^\mu\,,
\end{equation}
where we also took into account $\ii_Q\ii_Q \omega^M=0$.

Let us now consider the jet sub-bundle $J_X(N) \subset J_X(M)$ determined by $w^\alpha=0$ and their  prolongations. Note that $J_X(N)$ can be also seen as the jet-bundle associated to $X\times N \to X$. Restricting relations~\eqref{pme} to $J_X(N)$ one gets:
\begin{equation}
\label{pme2}
\ii_{s^N} \omega^{J(N)}=\dv L^{BV}_N+D_\mu (\cdot)^\mu\,, \qquad \ii_{s^N} \ii_{s^N} \omega^{J(N)}=D_\mu (\cdot)^\mu
\end{equation}
where $s^N$ denotes the projection of $s$ from $J_X(M)$ to  $J_X(N)$ (induced by the projection $M\to N$) and $\omega^{J(N)}, L^{BV}_N$ denote the respective objects on $J_X(M)$ pulled back to to $J_X(N)$. Here by some abuse of notations $D_\mu$ denotes a total derivative on either $J_X(M)$ of $J_X(N)$; this does not lead to confusions because $D_\mu$ are tangent to $J_X(N)\subset J_X(M)$. The above relations are obvious if one makes use of the special coordinate system on $J_X(M)$ induced by special coordinates $z^i,w^\alpha$ on $M$.


Because $N$ is symplectic we are dealing with the standard BV formulation so that the above can be rewritten as $\ab{L_N^{BV}}{L_N^{BV}}=D_\mu (\cdot)^\mu$, where  $L_N^{BV}$
is $L^{BV}$ restricted to $J_X(N)$. To summarize: understood as a local function on $J_X(M)$ the AKSZ action \eqref{BV-AKSZ-action} is precisely $L^{BV}=K+\cH_M$. The sub-bundle $J_X(N)$ obtained by factoring out the kernel of the symplectic structure on $M$ is a BV jet-bundle equipped with the BV symplectic structure. The restriction of $K+\cH$ to $J_X(N)$ satisfies master-equation and hence determines the BV formulation of the frame-like gravity. This is indeed true because (i) $N$ is symplectic (ii) $L^{BV}$ restricted to the body of $J_X(N)$ is just the Cartan-Weyl Lagrangian (iii) the terms linear in ghosts and antifields contains the complete set of gauge generators. Note that the resulting BV formulation on $J_X(N)$ does not depend on how exactly $N$ is realised as a submanifold of $M$. 

The only point that requires clarification is (iii) because this ensures that $L^{BV}$ is a proper solution to the master equations.
To see this let us spell out explicitly the terms in $L^{BV}$ that are linear in ghosts and linear in antifields (i.e. in $\xi^a$, $\rho^{ab}$ and  $\st{2}{\xi}{}^{a}_{\mu\nu}$, $\st{2}{\rho}{}^{ab}_{\mu\nu}$):
\begin{equation}
\label{gs}
\begin{gathered}
    \epsilon_{abcd}\st{2}{\rho}{}^{ab}((\nabla\xi)^c e^d -\xi^c(\nabla e)^d)\,,\qquad
    \epsilon_{abcd}\st{2}{\rho}{}^{ab} \rho^{c}_{e}e^e e^d\,,\\
    \epsilon_{abcd}\st{2}{\xi}{}^{a} \xi^b (d_x\omega+\omega\omega)^{cd}\,,\qquad
    \epsilon_{abcd}\st{2}{\xi}{}^{a}  (\nabla\rho)^{bc}e^d \,.
    \end{gathered}
\end{equation}
Here the integration over $d^n\theta$ is left implicit, $\nabla$ denotes a covariant differential with respect to $\omega^{ab}$, and for simplicity we set $\lambda=0$.  The first line encodes gauge transformations of $e^a_\mu$  and the second line encodes the gauge transformations of $\omega^{ab}_\mu$ (note that $\st{2}{\xi}{}^{a}_{\mu\nu}$ parameterize antifields conjugate to $\omega^{ab}_\mu$) provided one identifies $\xi^a$ and $\rho^{ab}$ as parameters (in a certain basis) of the diffeomorphisms and the local Lorentz transformations respectively.

However, it is difficult to explicitly compare $L^{BV}$ with the standard expression~\cite{Barnich:1995ap} of the BV master action for Cartan-Weyl action because even the above terms involve $\st{2}{\rho}{}^{ab}_{\mu\nu}$
which give an overcomplete set of coordinates in this sector and moreover the symplectic structure is not in the canonical Darboux form. Fortunately, in order to prove that $L^{BV}$ is a proper BV master action all we need to demonstrate is that it is proper (i.e. all gauge generators are taken into account). It is enough to do so in quadratic approximation because nonlinear correction can't bring extra degeneracy. 

Let us analyze the linearization of the gauge symmetries encoded in~\eqref{gs} around the vacuum solution $e^a_\mu=\delta^a_\mu$ and $\omega^{ab}_\mu=0$. One finds:
\begin{equation}
    \delta \bar\omega^{ab}_{c}=\d_c \rho^{ab}\,, \qquad \delta \bar e^a_b=\rho^a{}_b+\d_b\xi^a\,,
\end{equation}
where $\bar e^a_b$ and $\bar \omega^{ab}_c$ are related to the linearized $e^a_b$ and $\omega^{ab}_c$ respectively through a linear invertible redefinition. The above are precisely the linearized gauge symmetries of the Cartan-Weyl action. Equivalently, these are the gauge symmetries of its quadratic approximation:
\begin{equation}
\int_X \left( \bar e^a d_X \bar\omega^{bc} \Vol_{abc}+\bar\omega^a{}_c\bar\omega^{cb}\Vol_{ab}\right)\,,
\end{equation}
which is also known as the frame-like action of massless spin-2 field.
Thus we conclude that the master action $L^{BV}$ is a proper solution to the master equations and hence provides a correct BV formulation of Einstein gravity.

\section{The origin of the target space structures}

As we have seen the target space $(\algg[1],q,\omega^{\algg[1]})$ in a natural way defines  the complete BV formulation of general relativity. One may wonder how does this target space arise from the conventional formulation of gravity. This was mostly explained in~\cite{Grigoriev:2016wmk} but the relation between the presymplectic 2-form and the  BV antibracket was somewhat implicit. Here we give missing details.

Suppose we start with the BV-BRST complex of the metric gravity. The set of fields is given by the metric $g^{ab}$, diffeomorphism ghosts $\xi^a$ and their canonically conjugate antifields $g^*_{ab}$ and $\xi^*_{a}$. More geometrically, these variables are coordinates on the fiber $F$ of the underlying bundle $F\times X \to X$. The BV-BRST complex is given by local horizontal forms on the associated jet-bundle $J_X(F)$ equipped with the BRST differential and the horizontal differential $\dh$, for more details see \cite{Barnich:1995ap}. 

In the jet-bundle approach the standard symplectic structure of BV formulation  is given by
\begin{equation}
\omega^{sBV}=(dx)^n ( \dv g^{ab}\wedge \dv g^*_{ab}+ \dv \xi^{a}\wedge \dv \xi^*_{a})\,.
\end{equation}
It is of ghost degree $-1$ and horizontal form degree $n$. The BRST differential is an evolutionary vector field satisfying
\begin{equation}
\ii_s \omega^{sBV}=\dv L^{sBV}+\dh (\cdot)
\end{equation}
where $L^{sBV}$ is the integrand of the BV master action. It follows
\begin{equation}
    L_s \omega^{sBV}=\dh (\cdot)\,.
\end{equation}
Moreover, at least locally one can complete $\omega^{sBV}$ to a cocycle of the total BRST differential $\tilde s=\dh+s$:
\begin{equation}
    \omega^{tBV}=\omega^{sBV}+\omega^{sBV}_{n-1}+\ldots
\end{equation}
where $\omega^{sBV}_{k}$ has horizontal form degree $k$ and ghost degree $n-1-k$. 

Now we use the standard statement that for diffeomorphism-invariant systems, and gravity in particular, by changing variable one can bring the total BV-BRST complex to the form where $\tilde s=d_X+s$ and then eliminate 
$x^a,dx^a$ as contractible pairs, see e.g.~\cite{Barnich:1995ap,Brandt:1996mh,Barnich:2010sw} and references therein for more details. In more geometrical terms, this means that the underlying BV jet-bundle seen as a $Q$-bundle over $T[1]X$ is locally-trivial (see~\cite{Grigoriev:2019ojp} for more details).

Furthermore, eliminating further contractible pairs the total BRST complex reduces~\cite{Barnich:1995ap,Brandt:1996mh,Barnich:2010sw} to the minimal BRST complex of functions on the reduced  ghost-extended stationary surface $\cE$ which can be coordinatized by
\begin{equation}
 \xi^a,~~\rho^{ab}\,, \qquad W^{ab}_{cd},~~W^{ab}_{cd;c_1},~~\ldots ~~W^{ab}_{cd;c_1\ldots c_l}, ~~\ldots
\end{equation} 
where the first group of variables have ghost-degree $1$ and the second $0$.\footnote{Here we intentionally used the same notations as for coordinates on $\algg[1]$ to anticipate the relation between $\cE$ and $\algg[1]$.} Variables $\xi^a,\rho^{ab}$ originate from the diffeomorphism ghost and its antisymmetrized derivatives, while $W$-variables are related to the Weyl tensor and its algebraically-independent covariant derivatives restricted to the stationary surface.  Note that $W^{ab}_{cd;\ldots}$ variables can be chosen totally traceless. 

Upon the elimination of contractible pairs, the $\tilde s$-differential determines a differential  $q$ on $\cE$. Its complete explicit form in terms of intrinsic coordinates on $\cE$ is not known except in the sector of ghost degree $1$ variables:
\begin{equation}
\label{qgravred}
 q \xi^a=\xi^a{}_c\, \xi^c\,, \qquad q\rho^{ab}=\rho^{a}{}_c \,\rho^{cb}+\lambda \xi^a\xi^b+\xi^c\xi ^d W^{ab}_{cd}\,, \qquad \ldots\,,
\end{equation}
Furthermore,  $qW^{ab}_{cd;\ldots}$ is again proportional to $W^{ab}_{cd;\ldots}$, see e.g.~\cite{Brandt:1997iu,Barnich:2010sw}.

Local functions on $\cE$ equipped with $q$ form the minimal BRST complex for Einstein gravity.~\footnote{Note that the supermanifold $\tilde\cE$ of this variables equipped with $Q$-structure encodes all the information of the initial gauge theory. Indeed, as was shown in~\cite{Barnich:2010sw}, taking $\tilde\cE$ as a target space of the AKSZ sigma-model gives an equivalent formulation of the initial system at the level of equations of motion so that the system is fully reconstructed. It's equations of motion and gauge symmetries are precisely those of the minimal unfolded formulation~\cite{Vasiliev:1989xz,Vasiliev:2005zu}. Note however, that the explicit form of $q$ and hence of the unfolded equations of motion is not known in the intrinsic terms of $\cE$ but its existence, structure, and the implicit definition are easily arrived at  starting from the standard BV-BRST complex.   Analogous considerations apply to generic gauge theories though in contrast to gravity for linear theories formulations of this sort can be quite concise and explicit. See~\cite{Barnich:2010sw} and references therein for further details.} Because the minimal complex is an equivalent reduction of the initial one the initial $\tilde s$-cocycle $\omega^{tBV}$ gives rise to the respective $q$-cocycle  $\omega^{\cE}$ in the space of closed $2$-forms on $\cE$ of total ghost degree $n-1$. In fact the general structure of such form is rather restricted and it can be shown\footnote{To see this one can observe that elimination of the contractible pairs preserves the filtration by the order of derivatives. Together with the $q$-invariance  and the ghost degree conditions  this essentially fixes the form of the presymplectic structure. As an independent consistency check one can take the analogous presymplectic structure for linearized gravity computed in~\cite{Sharapov:2016sgx} and observe that the only component surviving the reduction is the linearization of $\omega^{\algg[1]}$.} that $\omega^{\cE}$ is precisely $\omega^{\algg[1]}$ trivially extended from $\algg[1]$ to $\cE$ (note that locally $\cE$ is a product of $\algg[1]$ and the space of Weyl tensors $W^{ab}_{cd\ldots}$). 

Now one can consider a presymplectic AKSZ sigma model with the target space $(\cE,q,\omega^{\cE})$ and try to reduce to the symplectic quotient. In the case at hand it is convenient to do it in two steps. In the first step one considers a distribution on $\cE$ generated by vector fields $\dl{W^{ab}_{cd;\ldots}}$ which obviously belong to the kernel distribution. The quotient can be realized as the surface $W^{ab}_{cd;\ldots}=0$ which is precisely $\algg[1]$ with the induced two form being $\omega^{\algg[1]}$
and the $Q$-structure being the restriction of $q$ to the surface.
In this way one systematically rederives the presymplectic  AKSZ formulation of Section~\bref{sec:psymp} starting from the conventional BV-BRST formulation of gravity.

It turns out that analogous considerations apply to a rather wide
class of gauge theories, giving a more precise understanding of the supergeometrical structures underlying their frame-like Lagrangians and BV formulations. Various examples of such presymplectic AKSZ formulations can be found in~\cite{Alkalaev:2013hta}.

\section{BFV phase space from presymplectic AKSZ}

Given an AKSZ model on a space-time manifold of the form $X=\Sigma\times \fR^1$, where $\Sigma$ corresponds to spatial slice and  $\fR^1$ to the time-line it is known~\cite{Grigoriev:1999ys,Barnich:2003wj,Grigoriev:2012xg} (see also~\cite{Cattaneo:2012qu,Ikeda:2019czt} for related developments, generalizations and applications) that its BFV formulation is given by an AKSZ sigma model restricted to $T[1]\Sigma$. The change of the dimension of the source space shifts by 1 the degree of the AKSZ action and the symplectic structure so that indeed such BFV-AKSZ sigma models defines a BFV formulation. 

That the constructed BFV formulation is correct immediately follows from the AKSZ formulation~\cite{Grigoriev:1999qz} of the standard construction~\cite{Fisch:1989rm,Dresse:1990ba} of the BV formulation from the BFV one.  More precisely, the 1d AKSZ sigma model with the target space being the above BFV-AKSZ sigma model can be identified with the initial AKSZ sigma model. Indeed, the space of supermaps from $T[1]\fR^1$ to $Smaps(T[1]\Sigma,\algg[1])$ is naturally identified with $Smaps(T[1](\Sigma\times \fR^1),\algg[1])$ and it is easy to check that the respective AKSZ structures coincide.

Let us consider the BFV version of the above presymplectic AKSZ sigma-model, which is obtained by replacing $T[1]X$ with $T[1]\Sigma$. The construction of an analog $M_H=Smaps(T_x[1]\Sigma,\algg[1])$ of the graded manifold $M$ is straightforward.  In so doing the coordinates  $e^a_i(x)$ entering $\xi^a(\theta)$ as $e^a_i(x)\theta^i$ are assumed to be such that $e^k_i(x)$, $k=1,\ldots,n-1$ is invertible (here and below we denote by $\theta^i$ the coordinates on the fibers of $T_x[1]\Sigma$). Note that now the presymplectic structure $\omega^{M_H}$ and ``Hamiltonian'' $\cH_H$ have ghost degree $0$ and $1$ respectively, i.e. are shifted by $1$ as compared to BV-AKSZ sigma model.

Just like $M$, $M_H$ is a graded presymplectic manifold. To see that it gives rise to a symplectic one let us apply Lemma~\bref{lemma:w-sub} taking as $M_H^0$ a submanifold determined by $\xi^a=0,\st{2}{\xi}{}^a_{ij}=0\,, \ldots\,,,\st{n-1}{\xi}{}^a_{i_1\ldots i_{n-1}}=0$. Then fix a generic point of $M_H^0$ and adjust the basis in $T_x[1]\Sigma$ and $\algg[1]$ such that $e^0_i=0$ and $e^a_i=\delta^a_i$. At this point the presymplectic structure can be written as:
\begin{equation}
\label{omega-p-h}
\omega^{M_H}_p = 
de^i_j\wedge d\omega^{0j}_i+
de^0_j\wedge d\omega^{j}+
d\xi^j \wedge d \st{2}{\rho}{}_j+
d\xi^0 \wedge d \st{2}{\rho}+
d\rho^{k0}\wedge d\st{2}{\xi}{}_{k}+
d\rho^{kj}\wedge d\st{2}{\xi}{}^0_{kj}
\end{equation}
where $\omega^{j},\st{2}{\rho}{}_{j}$, $\st{2}{\rho}$, $\st{2}{\xi}{}_{k}$ parameterize the following components:
\begin{equation}
\omega^{kj}_k\,, \qquad 
\st{2}{\rho}{}^{0k}_{jk}\,, \qquad 
\st{2}{\rho}{}^{jk}_{jk}\,, \qquad 
\st{2}{\xi}{}^{k}_{jk}
\end{equation}
and we took a liberty to redefine some of the components by constant factors. The coordinates along which $\omega^{M_H}$  degenerates are:
\begin{equation}
\st{3}{\xi}{}^{a|}\,, \qquad 
\st{3}{\rho}{}^{ab|}\,, \qquad 
\st{2}{\xi}{}^{k|i}e^j_i+(ij)\,, \qquad 
\st{2}{\rho}{}^{ak|i}e^j_i+(ij)\,, \qquad 
\omega^{ik|mn}e_m^je_n^l+(ij)(kl)\,.
\end{equation}
Lemma~\bref{lemma:w-sub} then implies that a submanifold $N_H$ where these coordinates vanish is symplectic. Moreover, the spectrum of the coordinates along which \eqref{omega-p-h} is nondegenerate precisely corresponds to the coordinates of the BFV phase space for Cartan-Weyl formulation of gravity.  More precisely, the BFV phase space we have arrived at is the one discussed recently in~\cite{Canepa:2020ujx}. Note that there exist alternative (but equivalent) BFV formulations which are related through elimination of BFV analogs~\cite{Grigoriev:2010ic,Grigoriev:2012xg} of the conventional generalized auxiliary fields~\cite{Dresse:1990dj}.

Let us consider the body $M_H^{body} \subset M_H$, obtained by setting to zero all the coordinates of nonvanishing degree. In contrast to the BV case the restriction of symplectic structure to the body is nonvanishing. More precisely, it gives rise to the phase space symplectic structure of the underlying constrained system. Indeed, setting all the nonvanishing degree coordinates to zero the resulting presymplectic form reads as:\footnote{Another way, employed recently in~\cite{Canepa:2020ujx} to arrive at this presymplectic structure  is to start with Cartan-Weyl action and find the presymplectic current (see e.g.~\cite{Crnkovic:1986ex,Khavkine2012,Sharapov:2016qne,Grigoriev:2016wmk}) that it defines on the stationary surface. The present derivation of this presymplectic structure from that on $\algg[1]$ was somewhat implicitly already in~\cite{Alkalaev:2013hta}}
\begin{equation}
\omega^{M_H^{body}}=e^a_i d e^b_j \wedge d\omega^{cd}_k \epsilon^{ijk}\epsilon_{abcd}\,.
\end{equation}
It is clear that this form is degenerate and moreover is a regular presymplectic one.  Employing the basis where $e^0_i=0$ and $e^i_j=\delta^i_j$ it is clear that the symplectic form on the quotient is precisely the restriction of~\eqref{omega-p-h} to the body of $N_H$.  The symplectic 2-form induced on the quotient gives the phase-space symplectic structure.

Note that the quotient can be described more invariantly~\cite{Canepa:2020ujx}. Namely, consider the following equivalence relation on the space with coordinates $\omega^{ab}_i$:
\begin{equation}
    \omega^{ab}_i\sim \omega^{ab}_i+v^{ab}_i\,, \qquad 
    v^{ab}_i e^c_j\epsilon^{ijk}\epsilon_{abcd}=0
\end{equation}
Using the adapted basis it is easy to check that this equivalence relation precisely removes the trace-free component of $\omega^{jl}_i$ leading to the restriction of~\eqref{omega-p-h} to the body of $N_H$.

Having chosen symplectic $N_H \subset M_H$ let us consider the BFV phase space $J_\Sigma(N_H)$. It is equipped with the induced symplectic structure and the local functional $S^{BFV}$ obtained by restricting AKSZ BFV charge from $J_\Sigma(M_H)$ to $J_\Sigma(N_H)$. The restrictions of ghost degree 1 coordinates $\xi^a$ and $\rho^{ab}$ to $N_H$ remain independent and are to be interpreted as ghost variables. The terms in $S^{BFV}$ linear in $\xi^a$ and $\rho^{ab}$ read respectively as:
\begin{equation}
    \int_\Sigma \xi^a \epsilon_{abcd}e^b (d_\Sigma\omega^{cd}+\omega^{c}{}_e \omega^{ed})
\,, \qquad 
    \int_\Sigma \rho^{ab} \epsilon_{abcd}e^c (d_\Sigma e^d+\omega^d{}_f e^f) \,.
    %
\end{equation}
The coefficients are precisely the 1-st class constraints
encoded in the Cartan-Weyl action.  It follows $S^{BFV}$ have the structure similar to that of the proper BRST charge of the theory in question.  What does not follow from the above considerations is that $S^{BFV}$ satisfies master equation on $J_\Sigma(N_H)$. If $M_H$ were a regular presymplectic manifold this would follow just like in BV case. However, the presymplectic structure on $M_H$ is in fact not regular and the detailed analysis of the presymplectic BFV-AKSZ formulation of gravity will be performed elsewhere.

Let us only comment on the relation between BV and BFV formulations arising from the presymplectic AKSZ.  It is easy to see that $M$ does not coincide with $T^*[-1]M_H$ and hence these BV and BFV formulations are not related through the usual construction~\cite{Fisch:1989rm,Dresse:1990ba,Grigoriev:1999qz}. More precisely, the BV formulation obtained from the above BFV in this way is a certain equivalent reduction of the standard one. Indeed, because not all components of the Lorentz connection are independent coordinates on $M_H$ it is easy to see that the same applies to $T^*M_H$ whose coordinates are the BV fields. At the same time in the standard BV formulation for the Cartan-Weyl action all the components of the Lorentz connection are  independent fields. This subtlety seems to be directly related to the discrepancy observed and investigated recently in~\cite{Canepa:2020rhu,Canepa:2020ujx}. However, the respective presymplectic BV and BFV AKSZ sigma models described in this work are obviously related via a straightforward presymplectic extension of the 1d AKSZ construction of~\cite{Grigoriev:1999qz}.

\section{Conclusions}

By concentrating on the example of  general relativity  we have demonstrated that presymplectic AKSZ-type sigma models naturally encode BV as well as BFV formulations in a rather concise and geometrical way. In so doing we have uncovered an interesting supergeometrical structures underlying the BV formulation of the frame-like gravity. This makes more explicit the deep relation between the underlying Cartan geometry and the BV formulation of gravity.  

The present construction can be regarded as the BV extension of the so-called intrinsic Lagrangians~\cite{Grigoriev:2016wmk}, which are natural 1st order Lagrangian defined in terms of the equation manifold (stationary surface of the theory) equipped with the horizontal differential and the presymplectic current.

It is important to stress that for various applications the presymplectic AKSZ formulation can be used in place of conventional the conventional AKSZ. For instance, the formal path integral can be written just in terms of the presymplectic AKSZ action. The only difference is that some additional gauge-fixing conditions taking care of the kernel of the presymplectic 2-form are to be implemented in the path integral. Analogous remark applies to the presymplectic generalization of the generic (not necessarily AKSZ) BV fomalism.

Possible further developments include the extension of the present considerations to general local gauge theories including those which are not diffeomorphism-invariant. This can be naturally done using the language of gauge PDEs~\cite{Grigoriev:2019ojp} equipped with the compatible presymplectic structures. There also remains to investigate further the BFV interpretation of the presymplectic AKSZ formulation of gravity as well as the presymplectic AKSZ version of the relation between its BV and BFV descriptions.

An attractive feature of the AKSZ formalism is that it makes  manifest the relation between the bulk theory and the theory induced on the boundary. This feature already manifest itself in that Hamiltonian formulation (seen as a boundary theory induced on the surface of the initial data) is obtained by simply pulling back the AKSZ model to the boundary~\cite{Grigoriev:1999ys,Barnich:2003wj,Grigoriev:2010ic,Cattaneo:2012qu,Grigoriev:2012xg}. Applications to more general situations including holographic relations can be found in~\cite{Bekaert:2012vt,Bekaert:2013zya,Grigoriev:2018wrx,Mnev:2019ejh,Rejzner:2020xid,Canepa:2020rhu}. Let us also mention recent works~\cite{Freidel:2020ayo,Freidel:2020svx}, where the presymplectic structure (also known as the presymplectic current, see e.g.~\cite{Crnkovic:1986ex,Khavkine2012,Sharapov:2016qne}) induced on the space of solutions to frame-like gravity is employed in the study of its boundary structure.

\section*{Acknowledgments}
\label{sec:Aknowledgements}

M.G. acknowledges discussions with G.~Barnich and V.~Gritsaenko. He is also grateful to A.~Cattaneo, G.~Caneppa, M.~Schiavina for the discussions and for attracting attention to their recent  related works~\cite{Canepa:2020rhu,Canepa:2020ujx}. The work of M.G. was supported in part by the Russian Science Foundation grant 18-72-10123. The research of A.K. was supported by the grant no. 18-00496S of the Czech Science Foundation.
\appendix

\section{The structure of the kernel}
\label{app:VF-kernel}

Here we show that prolongations of the vector fields~\eqref{VF} determine the kernel of $\omega^M$ at generic $p\in M_0$. To begin with it is easy to see that $\hat X^4_{a}$ and  $\hat Y^4_{ab}$ exhaust the kernel of $\omega^M$ in the sector of $\st{4}{\xi}{}^a_{\mu\nu\rho\sigma}$
and $\st{4}{\xi}{}^{ab}_{\mu\nu\rho\sigma}$.

Consider then $\hat X^3_{ab}$. At $p\in M_0$ one has:
\begin{equation}
\hat X^3_{ab}=\epsilon_{acdf}e^c_\mu e^d_\nu e^f_\rho \dl{\st{3}{\xi}{}^b_{\mu\nu\rho}}+(ab)
\end{equation}
where $(ab)$ denote terms symmetrizing the expression in $a$ and $b$. Using the basis where $e^a_\mu\delta^a_\mu$ and introducing notation $\st{3}{\xi}{}^{a|e}$ for $\epsilon^{efcd}\st{3}{\xi}{}^a_{fcd}$ one finds:
\begin{equation}
\hat X^3_{ab}=\dl{\st{3}{\xi}{}^{a|b}}+(ab)\,.
\end{equation}
Using symmetric and  antisymmetric components $\hat \xi_S^{ab}$ and $\hat \xi_A^{ab}$  as new coordinates it is easy to see that $\hat X^3_{ab}=\dl{{\hat\xi}{}^{ab}_S}$ while $\omega^M$ is nondegenerate on $\dl{{\hat\xi}{}^{ab}_A}$. These later coordinates parameterize the antifields conjugated to $\rho^{ab}$, cf. \eqref{omega-p}.

Now we turn to $\hat Y^{2}_{abcd}$. As before restricting to a generic point of $M_0$ and using a basis where $e^a_\mu=\delta^a_\mu$ introduce new coordinates replacing $\rho$:
\begin{equation}
    \bar\rho^{abcd}=\epsilon^{ab\mu\nu}\st{2}{\rho}{}^{cd}_{\mu\nu}
\end{equation}
This is an invertible change of coordinates. In the new coordinates:
\begin{equation}
\hat Y^{2}_{abcd}=\dl{\bar\rho^{abcd}}+(ac)(bd)
\end{equation}
where $(ac)(bd)$ denote 3 terms symmetrizing $ac$ and $bd$. 

In terms of $\bar\rho$ the trace $\st{2}{\rho}{}^{ab}_{cb}$ can be expressed in terms of $\epsilon_{abcd}\bar\rho^{ebcd}$ and one gets
\begin{equation}
\hat Y^{2}_{abcd} \left( \epsilon_{\mu\nu\rho\sigma}\bar\rho^{\alpha \nu\rho\sigma}\right)
\end{equation}
so that the trace is not in the kernel of the presymplectic structure. The complementary component is described by $\rho$ satisfying $\epsilon_{abcd}\bar\rho^{ebcd}=0$, i.e. having the symmetry type associated to rectangular YT. But this is precisely the tensor structure of $\hat Y^2_{abcd}$. Using new coordinate system $\hat\rho^{abcd},\hat\rho^{ab}$, where $\hat\rho^{abcd}$ is the component of $\bar\rho^{abcd}$ that have symmetry structure described by the rectangular YT while $\hat\rho^{ab}$ parameterize the trace, one finds:
\begin{equation}
\hat Y^2_{abcd}=\dl{\hat\rho^{abcd}}+(ac)(bd)    
\end{equation}
so that indeed $\dl{\hat\rho^{abcd}}$ are in the kernel of the symplectic form. It is easy to check that among $\dl{\hat\rho^{ab}}$ there are no zero vectors of the presymplectic form. In fact $\hat\rho^{ab}$ parameterize antifields associated to $e^a_\mu$.

The remaining fields are $\hat Y^3_{abc}$. Introduce new parameterization of $\st{3}{\rho}{}_{\mu\nu\rho}^{ab}$ in terms of $\bar\rho^{c|ab}$ proportional to $\epsilon^{c\mu\nu\rho} \st{3}{\rho}{}_{\mu\nu\rho}^{ab}$ so that
\begin{equation}
\hat Y^3_{abc}=\dl{\bar\rho^{a|bc}}+(ab)
\end{equation}
$\hat Y^3_{abc}$ contains two irreducible components: $\epsilon_{abcd} \bar\rho^{a|bc}$ and ${\bar\rho^{a|bc}}+(ab)$. The first is precisely the double trace $\st{3}{\rho}{}_{abd}^{ab}$ which satisfies $Y^3_{abc}\st{3}{\rho}{}_{efd}^{ef}=0$. The second ones give rise to $Y^3_{abc}$. Presymplectic form $\omega^M$ is nondegenerate along $\st{3}{\rho}{}_{abd}^{ab}$.

\footnotesize
\bibliography{HSmaster}
\end{document}